\def\normalsize{\@setfontsize{\normalsize}{9.5pt}{11.5pt}}
\newtheorem{lemma}[]{{Lemma}}
\renewcommand{\vec}[1]{\ensuremath{\boldsymbol{{#1}}}}
\newcommand{\mat}[1]{\ensuremath{\boldsymbol{{#1}}}}
\newcommand{\RR}{\mathbb{R}}
\newcommand{\Exp}{{\mathbf{E}}}
\newcommand{\dd}{{\rm d}}
\newcommand{\tr}{{\rm tr}}
\newcommand{\diag}{{\rm diag}}
\newcommand{\x}{{\vec x}}
\newcommand{\y}{{\vec y}}
\newcommand{\z}{{\vec z}}
\newcommand{\q}{{\vec q}}
\newcommand{\Deltav}{{\vec \Delta}} 
\newcommand{\xiv}{{\vec \xi}}
\newcommand{\m}{{\vec m}}
\newcommand{\zb}{\bar{z}}
\newcommand{\zvb}{\bar{\z}}
\newcommand{\ok}{{\rm ok}}
\newcommand{\ko}{{\rm ko}}
\newcommand{\xok}{\x^\ok}
\newcommand{\xko}{\x^\ko}
\newcommand{\xh}{\hat{\x}}
\newcommand{\yok}{\y^{\ok}}
\newcommand{\zok}{\z^{\ok}}
\newcommand{\xhok}{\xh^\ok}
\newcommand{\xhko}{\xh^\ko}
\newcommand{\cov}{{\mat \Sigma}}
\newcommand{\covok}{\cov^\ok}
\newcommand{\covko}{\cov^\ko}
\newcommand{\lambdaok}{\lambda^\ok}
\newcommand{\lambdako}{\lambda^\ko}
\newcommand{\lambdav}{{\vec \lambda}}
\newcommand{\lambdavok}{\lambdav^\ok}
\newcommand{\lambdavko}{\lambdav^\ko}
\newcommand{\fok}{f^\ok}
\newcommand{\fko}{f^\ko}
\newcommand{\white}{{\mat W}}
\newcommand{\muok}{\mu^\ok}
\newcommand{\muv}{{\vec \mu}}
\newcommand{\muvok}{\muv^\ok}
\newcommand{\fx}{f_{\x}}
\newcommand{\fxh}{f_{\xh}}
\newcommand{\fokx}{f^\ok_{\x}}
\newcommand{\fkox}{f^\ko_{\x}}
\newcommand{\fokxh}{f^\ok_{\xh}}
\newcommand{\fkoxh}{f^\ko_{\xh}}
\newcommand{\fxhgx}{f_{\xh|{\x}}}
\newcommand{\mok}{\vec \mu^{\ok}}
\newcommand{\mko}{\vec \mu^{\ko} }
\newcommand{\covokxh}{\hat{\cov}^\ok}
\newcommand{\covkoxh}{\hat{\cov}^\ko}
\newcommand{\stack}{r}
\newcommand{\Pdet}{P_{\rm det}}
\newcommand{\Gauss}[2]{{\mathcal{G}}\left(#1,#2\right)}
\newcommand{\dist}{\omega}
\newcommand{\AUC}{{\rm AUC}}
\definecolor{dark}{HTML}{404070}
\definecolor{light}{HTML}{E0E0FF}
\title{RDD: Pareto Analysis of the Rate-Distortion-Distinguishability Trade-off}
\author{
    Andriy Enttsel,~\IEEEmembership{Student Member,~IEEE,}
    Alex~Marchioni,~\IEEEmembership{Member,~IEEE,}
    Andrea Zanellini,
    Mauro~Mangia,~\IEEEmembership{Member,~IEEE,}
    Gianluca~Setti,~\IEEEmembership{Fellow,~IEEE,}
    and~Riccardo~Rovatti,~\IEEEmembership{Fellow,~IEEE}
\thanks{A. Enttsel, A. Marchioni, M. Mangia, and R. Rovatti are with the Department of Electrical, Electronic, and Information Engineering, University of Bologna, 40136 Bologna, Italy, and also with the Advanced Research Center on Electronic Systems, University of Bologna, 40125 Bologna, Italy (e-mail: alex.marchioni@unibo.it, andriy.enttsel@unibo.it, andrea.zanellini2@unibo.it, mauro.mangia@unibo.it, riccardo.rovatti@unibo.it).}%
\thanks{G. Setti is with CEMSE, King Abdullah University of Science and Technology (KAUST), Saudi Arabia (e-mail: gianluca.setti@kaust.edu.sa). He is also on leave from the Department of Electronics and Telecommunications (DET), Politecnico di Torino, 10129 Torino.}%
\thanks{R. Rovatti is also with the Alma Mater Research Institute for Human-Centered AI University of Bologna, 40015 Bologna, Italy}
}
\begin{document}

\maketitle

\begin{abstract}
Extensive monitoring systems generate data that is usually compressed for network transmission. This compressed data might then be processed in the cloud for tasks such as anomaly detection. However, compression can potentially impair the detector's ability to distinguish between regular and irregular patterns due to information loss.
Here we extend the information-theoretic framework introduced in \cite{Marchioni_2024TSMC} to simultaneously address the trade-off between the three features on which the effectiveness of the system depends: the effectiveness of compression, the amount of distortion it introduces, and the distinguishability between compressed normal signals and compressed anomalous signals.
We leverage a Gaussian assumption to draw curves showing how moving on a Pareto surface helps administer such a trade-off better than simply relying on optimal rate-distortion compression and hoping that compressed signals can be distinguished from each other.

\end{abstract}

\begin{IEEEkeywords}
Lossy Compression,  Anomaly Detection, Rate-Distortion-Distinguishability Theory, Internet of Things, Dimensionality Reduction.
\end{IEEEkeywords}

\section{Introduction}

The ever-growing pervasiveness of sensor networks and Internet-of-Things (IoT) devices necessitates the development of efficient acquisition systems. These systems play a critical role in various domains, from monitoring industrial processes and critical infrastructure to scientific research and environmental surveillance. By collecting and storing vast amounts of data, acquisition systems provide a comprehensive picture of the monitored environment, enabling real-time analysis and informed decision-making. However, the sheer volume of data generated by such systems poses challenges in storage, transmission, and computational processing.

In modern massive acquisition systems, we can identify a common scenario that involves numerous sensing units that convert unknown physical quantities into samples that are subsequently transmitted through a network. To minimize the amount of transmitted data and retain only essential information, sensor readings are often subjected to compression through a lossy mechanism \cite{
Loia_2017TSMC,
Liu_2022TSMC}. This process inherently involves a trade-off between the transmission bit \textit{rate} and the extent of information loss measured with \textit{distortion}.

The compressed data are then often transmitted to a remote unit -- potentially a cloud facility -- where it is processed and stored according to the application's needs. One such task is the detection of anomalies, which consists in determining whether the signal comes from a normal or an anomalous source \cite{
Burrello_JIoT2020,
Kurt_TPAMI2021,
Ciancarelli_IAC2022,
Lo_TPAMI2023,
Ciancarelli_SCI2023}. When the statistical characteristics of the anomalies are not known, the detection is an unsupervised problem, conversely, it can be viewed as a supervised task. 

Nevertheless, compression may affect the performance of a detector. In fact, lossy compression is most effective when it discards irrelevant information to the recovery of the typical signal, i.e., when the distortion is minimized. However, that information may be crucial for a detector to determine whether the signal is anomalous. As a result, designing a system that involves both compression and detection is a challenge that extends beyond merely striking a balance between rate and distortion, but it involves a threefold trade-off: \textit{rate-distortion-distinguishability} (RDD).

How a compression scheme addressing the rate-distortion trade-off affects distinguishability has already been investigated in \cite{Marchioni_2024TSMC}. Here, we extend that framework to consider the threefold trade-off. We therefore study the trade-off between compression, information loss, and the ability to discriminate between compressed normal and anomalous sources, in both supervised and unsupervised scenarios. To derive theoretical results, we rely on the Gaussian assumption for which both normal and anomalous sources and their compressed versions are considered Gaussian. The outcome consists of a Pareto surface that generalizes the rate-distortion curve \cite{Cover_1991} and optimally describes the rate-distortion-distinguishability trade-off in the Gaussian case.

We then replicate the trends highlighted by the theoretical results for real-world compression schemes and real-world signals. In particular, we consider:
\begin{enumerate*}[label=(\roman*), font=\itshape]
  \item a Karhunen-Loève transform (KLT)-based compression scheme working on Gaussian signals;
  \item JPEG compression in which quantization tables are adapted to promote anomaly detection;
  \item a Neural Network-based lossy compression integrating a term accounting for anomaly detection performance.
\end{enumerate*}
In all these cases, we highlight the dependence of anomaly detection performance on both rate and distortion.

The paper is organized as follows. In Section~\ref{sec:RDD} we extend the rate-distortion framework, including distinguishability measures in the optimization process. Section~\ref{sec:SOA} is dedicated to an extensive literature review. Section~\ref{sec:gaussian-framework} adapts the rate-distortion-framework to Gaussian signals. Section~\ref{numerical-evidence} presents the numerical results for both theoretical framework and real-world counterparts. In Section~\ref{sec:conclusion}, we draw the conclusions.

\section{Rate vs. distortion vs. distinguishability}
\label{sec:RDD}
Let us use $\x$ to represent a generic stationary discrete-time source that, at each time $t$ generates a random vector $\x[t]\in\RR^n$. This source mostly generates typical signal instances but rarely gives rise to anomalies. To distinguish between normal and anomalous behavior, we refer to two independent sources $\xok$ and $\xko$, respectively, where the two vectors obey two different probability density functions (PDFs) $\fok$ and $\fko$. At any time $t$, the observable process is either $\x[t]=\xok[t]$ or $\x[t]=\xko[t]$. Exploiting stationarity, from now on we drop the time index $t$.

\subsection{Rate vs. distortion}

The observable $\x$ is converted into symbols suitable for transmission over a limited bandwidth channel, which can only handle a maximum number of symbols per second. If this rate is insufficient, compression discards some of the information to match the capacity of the channel. More in detail, $\x\in\RR^n$ is encoded into a reduced information symbol $\y$ that can be transmitted through the channel and later expanded to $\xh$.
Consequently, the receiver works on a signal $\xh$ that diverges from the original, with distortion increasing as the channel capacity decreases. The relationship between rate and distortion is thoroughly discussed in rate-distortion theory \cite[Chapter 13]{Cover_1991} and can be used to design the encoder compressing $\x$.

Typically, since anomalies are rare, designers tune the encoder to match the normal behavior of the signal source for which $\x=\xok$ only. Hence, distortion is defined as
\begin{equation}
\label{eq:distortiondef}
D = \Exp\left[\left\|\xok-\xhok\right\|^2\right]
\end{equation}
where $\Exp[\cdot]$ stands for expectation.

Compression strength is measured by the rate of transmitting $\y$ that maps to $\xh$. Since the receiver mapping from $\y$ to $\xh$ is injective, the encoder supposes $\x=\xok$, such rate can be quantified by the mutual information between $\xok$ and $\xhok$ indicated as $I\left(\xhok;\xok\right)$ \cite[Chapter 8]{Cover_1991}.

With these definitions, the relationship between minimum achievable rate $\rho$ and maximal accepted distortion $\delta$ is regulated by \cite[Theorem 13.2.1]{Cover_1991}
\begin{equation}
\label{eq:rate-distortion}
\rho(\delta)=\inf_{\fxhgx} I\left(\xhok;\xok\right) \quad\text{s.t. $D\le \delta$}
\end{equation} 
where $\fxhgx$ is a conditional PDF modeling the possibly stochastic mapping characterizing the encoder-decoder pair so that the PDF of $\xh$ is
\begin{equation}
\label{eq:encoder}
\fxh(\alpha)=\int_{\RR^n}\fxhgx(\alpha,\beta)\fx(\beta)\dd\beta .
\end{equation}

Although \cite[Theorem 13.2.1]{Cover_1991} defines the rate-distortion function in the discrete case, it can also be proved for well-behaved continuous sources \cite[Chapter 13]{Cover_1991} as considered in this work.

\subsection{Distinguishability}

The encoder $\fxhgx$ designed considering $\x=\xok$ is used for normal and anomalous signals. Hence, the retrieved signal can be either $\xh=\xhok$ or $\xh=\xhko$, where the two vectors have in general different PDFs $\fokxh$ and $\fkoxh$ given by \eqref{eq:encoder} for $\fx=\fokx$ and for $\fx=\fkox$.

As compression increases (and thus the rate decreases), the statistics of the two vectors $\xhok$ and $\xhko$ tend to converge. In the limit, maximum compression (and thus zero rate) is achieved with a constant $\xh$ that is independent of the source. This reveals a second trade-off: the one between compression and distinguishability between two possible signal sources $\xhok$ and $\xhko$, which conditions the effectiveness of any detector acting on $\xh$ to detect possible anomalies.

In \cite{Marchioni_2024TSMC}, the authors define distinguishability in information-theoretic terms and under two different assumptions.
If the detector is only aware of the normal source, i.e., an anomaly-agnostic detector, then distinguishability can be quantified by
\begin{equation}
\label{eq:zeta}
Z = \left|\int_{\RR^n} \left[\fokxh(\alpha) - \fkoxh(\alpha)\right] \log_2\fokxh(\alpha) \dd\alpha \right|.
\end{equation}
where $Z$ is defined as $|\zeta|$ in \cite{Marchioni_2024TSMC}.

Conversely, when the detector knows the statistics of both normal and anomalous sources, i.e., an anomaly-aware detector, then distinguishability can be measured with 
\begin{eqnarray}
\label{eq:kappa}
J = \int_{\RR^n} \left[\fokxh(\alpha) - \fkoxh(\alpha)\right] \log_2\left[\frac{\fokxh(\alpha)}{\fkoxh(\alpha)}\right] \dd\alpha .
\end{eqnarray}

Note that $J$ (defined as $\kappa$ in \cite{Marchioni_2024TSMC}) is equivalent to a well-known divergence measure for binary classification and pattern recognition problems \cite{Jeffreys_RSPA1946,
Kullback_1951AMS,
Toussaint_TSMC1978,
Morgera_TPAMI1984}.

Distinguishability is reflected in the ability of a detector to determine the signal source. However, in practice, signals PDFs are hardly ever available or estimable with sufficient accuracy. For this reason, in practical cases, we measure the detector's performance in terms of probability of detection $\Pdet$ \cite{Marchioni_2024TSMC}
\begin{equation}
\Pdet = \begin{cases}
\AUC & \text{if $\AUC \ge 0.5$}\\
1 - \AUC & \text{if $\AUC < 0.5$}\\
\end{cases}
\end{equation}
with $\AUC$ indicating the Area-Under-the ROC Curve, defined starting from the anomaly score $s$ produced by the detector $\AUC={\rm Prob}\{s( \xhko ) > s( \xhok ) \}$. In contrast to the accuracy and similar metrics computed from binary labels, we prefer $\AUC$ and $\Pdet$ because they are directly obtained from the anomaly scores without needing any threshold that often depends on the specific application.

\subsection{Rate vs. distortion vs. distinguishability}

The figures of merit in \eqref{eq:zeta} and \eqref{eq:kappa} are used in \cite{Marchioni_2024TSMC} after \eqref{eq:rate-distortion} is solved to associate a distinguishability level to each possible pair of $\rho$ and $\delta$ optimally addressing the classical rate-distortion trade-off. The authors also show that, in the case of Gaussian sources, the information-theoretical measures anticipate the performance expressed in terms of $P_{\rm det}$.

Here, we address the more general trade-off between rate-distortion and distinguishability. To measure the latter we rely on \eqref{eq:zeta} and \eqref{eq:kappa} and model the triple trade-off by the following optimization problem:
\begin{equation}
\label{eq:rate-distortion-distinguishability}
\rho(\delta, \omega)=\inf_{\fxhgx} I\left(\xhok;\xok\right)\;\; \text{s.t.}\;\;
\begin{array}{c}
D\le \delta \wedge \Exp[Z]\ge\omega \\
\text{or} \\
D\le \delta \wedge J\ge\omega
\end{array}
\end{equation} 
\noindent where $\omega$ is the minimum tolerated level of distinguishability and only one of the two alternative constraints is considered.

In the first distinguishability constraint, $\Exp[Z]$ denotes the average performance quantified by $Z$ across a set of anomalies. We bound the expected value rather than $Z$ directly because evaluating the latter requires prior knowledge of $\fkox$, which is not available for the anomaly-agnostic detector.

The relationship between $\rho$, $\delta$, and $\omega$ will draw a Pareto efficiency surface in the three-dimensional parameter space, i.e., the locus of points moving from which at least one of the three merit figures becomes suboptimal.

Note that in both \eqref{eq:rate-distortion} and \eqref{eq:rate-distortion-distinguishability} we consider the rate and distortion corresponding to the $\ok$ source since the task is anomaly detection. Anomaly detection is intended as a one-class classification where the classifier distinguishes normal inputs from potential anomalies that are assumed to be rare events. This scenario is different from binary classification in which the two classes have the same relevance and there would be no reason to tune the rate and distortion on only one of the two classes.

\section{Related Works}
\label{sec:SOA}
The single trade-offs between rate-distinguishability, rate-distortion, and distortion-distinguishability have already been analyzed in the literature. However, the joint trade-off rate-distortion-distinguishability has been explored only for classification, but never for anomaly detection.

\subsection{Rate-distinguishability}
The impact of compression on the ability of a detector to differentiate between two sources of information has been first studied in \cite{Ahlswede_TIT1986}. In this work, the issue of hypothesis testing is addressed within the context of a single information source subjected to a rate constraint. However, the authors do not consider any constraint on distortion as the original signals are not required to be recovered.

In \cite{Tishby_Allerton1999}, the authors present a framework, named the information bottleneck, that allows to leverage the trade-off between rate and a broadly defined distortion criterion, not necessarily targeting reconstruction as in \eqref{eq:distortiondef}. This criterion determines which aspects of the information content within the original signal should be retained during compression, causing the compressor to restrict the rate while preserving specific features. In particular, the preserved features encompass the information present in the original signal common with a second signal introduced to suit the needs of the target application. For instance, this adaptation of compression has been applied to anomaly detection \cite{Crammer_ICML2004, Crammer_ICML2008}. Although the distortion criterion can be generalized to anomaly detection, such a framework only handles a single trade-off with rate and cannot describe the relation between distortion and distinguishability.
For a complete review of the information bottleneck, see \cite{Hu_TPAMI2024}.

Similarly, the authors in \cite{Singh_ICIP2020} proposed an end-to-end framework that jointly optimizes the rate considering a specific task objective. They then evaluated the method's effectiveness in the context of classification.

\subsection{Rate-distortion}

Lossy compression schemes are often designed to minimize the rate given a distortion constraint. This trade-off has been extensively addressed in the field of Information Theory \cite{Cover_1991}, which has brought to the definition of an optimal rate-distortion curve. This curve represents the theoretical lower bound for the transmission rate while maintaining a specified maximum distortion level.
Such a curve is source-dependent and can be explicitly formulated for only a few sources, such as Gaussian sources. Moreover, in practical scenarios, the optimal rate-distortion curve is not attainable. However, it still serves as a valid benchmark for evaluating the performance of a compression method, even when approached suboptimally. 

Among the most adopted compression techniques it is worth mentioning transform coding \cite{Goyal_SPM2001} that, due to its computational efficiency, has been the core of the image compression standards like JPEG \cite{Wallace_TCE1992} and JPEG 2000 \cite{Skodras_SPM2001}. Another important class of compressors is based on vector quantization \cite{Gray_1984ASSP}, a technique that, by simultaneously discretizing all components of a multivariate source, allows to achieve compression performance close to the theoretical rate-distortion bounds \cite{Gray_TIT1998}.
For a more comprehensive explanation of lossy compression and the rate-distortion trade-off, refer to \cite{Gray_TIT1998}.

In addition, with the development of deep learning, end-to-end optimized image compression frameworks are gaining popularity \cite{
Theis_2017ICLR,
Mentzer_CVF2018,
Hu_TPAMI2022,
Duan_2024TPAMI}. Their success is due to their ability to simultaneously optimize all the modules forming a compressor, leading to improved performance compared to traditional methods. The study in \cite{Hu_TPAMI2022} presents a comprehensive survey and benchmark of end-to-end learned image compression methods.

\subsection{Distortion-distinguishability}

The well-established rate-distortion framework has been extended in \cite{Marchioni_2024TSMC} to consider the scenario in which the main goal of compression is still to minimize distortion given a constrained rate, but there may be secondary tasks such as anomaly detection. They define two metrics $Z$ and $J$ to measure distinguishability and then study the trade-off between distortion and detection performance when the signal is optimally compressed in rate-distorion terms. 

On the other hand, some works focus on practical lossy compressors that can administer the distortion-classification trade-off \cite{
Oehler_1995TPAMI,
Baras_1999TIT,
Gupta_2003TIT,
Lexa_2012TSP}. These works focus on vector quantizers and tailor them for the dual purpose of classification and compression. Specifically, they enhance standard vector quantizer design algorithms by integrating a relevant term into the distortion measure, addressing classification performance.

Closely related to lossy compression is dimensionality reduction which aims to approximate the signal with a lower-dimensional representation \cite{
Kramer_AICE1991,
Vetterli_SPM2001,
Jolliffe_2002,
Rezende_2014PMLR}. In principle, dimensionality reduction algorithms cannot be considered proper lossy compressors since they do not include a quantization stage that limits the rate. Nevertheless, by seeking a more compact representation, dimensionality reduction does lead to a loss of information and distorts the signal. For this reason, a dimensionality reduction stage can be included in a proper lossy compressor to achieve state-of-the-art performance \cite{
Theis_2017ICLR,
Mentzer_CVF2018,
Cheng_PCS2018,
Hu_TPAMI2022}.
Moreover, dimensionality reduction is also exploited in anomaly detection \cite{Ruff_PMLR2018, Ruff_2021ProcIEEE, Seonho_2022AMAI} to mitigate the problem of the curse of dimensionality and increase the signal-to-noise ratio. Indeed, a detector is often more effective when it works on a lower dimensional space compared to the original domain.

The authors in \cite{Cao_2019TC} proposed a dimensionality reduction scheme based on an autoencoder (AE) to build an anomaly detector. This scheme incorporates a well-suited regularization term enhancing the performance of the detector that operates on the compressed signal. 
Similarly, the authors in \cite{Enttsel_AICAS2024} leverage $Z$ metric to identify an alternative regularization term to train the AE and enhance detection performance. They also analyzed the influence of this term on the balance between distortion and distinguishability. 
These two regularization approaches are then compared in \cite{Enttsel_EUSIPCO2024} for the detection of different anomalous sources.

In the context of hyperspectral imagery, the work in \cite{Fowler_TIP2012} investigates the impact of random projection-based dimensionality reduction on anomaly detection and reconstruction by adapting the Reed-Xiaoli algorithm \cite{Reed_TSP1990} to the compressed domain and providing two reconstruction paths for normal and anomalous pixels.
In the same context, the authors in \cite{Du_2007GRSL} assess a compression scheme that combines Principal Component Analysis (PCA) and JPEG 2000 by analyzing the rate-distortion trade-off and the impact of the information loss on the anomaly detection task.

\subsection{Rate-distortion-accuracy}

The three-fold trade-off between rate, distortion, and distinguishability has already been investigated in the case of classification. 
In \cite{Yanting_2005SJ}, the authors developed a rate-distortion framework to examine the performance of different combinations of compression schemes and algorithms that estimate the orientation of the target. 
The authors in \cite{Zhang_2023DSP} analyze the joint rate-distortion-accuracy trade-off with a theoretical framework from which they provide insight for practical compression schemes specifically for the classification of compressed images.
The same triple trade-off is addressed in \cite{Luo_2021DCC, Chamain_2022IOTJ, Bai_2022AAAI} with practical compression schemes. In \cite{Luo_2021DCC}, the authors tune the JPEG quantization tables to enhance either the classification or reconstruction performance given a certain rate. 
While in \cite{Chamain_2022IOTJ,Bai_2022AAAI}, the authors introduce a Neural Network-based frameworks that allow to jointly optimize a compressor, a classifier, and possibly a decoder.

\subsection{Rate-distortion-third metric}

The aforementioned works \cite{Luo_2021DCC, Chamain_2022IOTJ} belong to a broader paradigm that focuses on communication for joint human and machine vision and addresses the threefold trade-off between rate-distortion and a third metric that measures performance in an image/video task. In \cite{Torfason_2018ICLR, Liu_ICPR2022, Liu_2024ITJ}, the authors tackle the tasks of classification, semantic segmentation, object detection, foreground extraction, or depth estimation in images. In contrast, in \cite{Duan_2020TIP, Sheng_2024TPAMI, Yang_TPAMI2024}, the authors consider video processing tasks such as action recognition, denoising, super-resolution, scene classification, semantic segmentation, or object classification.

It is also worth mentioning the works \cite{
Blau_CVPR2018,
Blau_ICML2019} in which the authors study how the perception quality of the image is affected by distortion and then analyze a joint trade-off between rate, distortion, and a perception measure.

Regarding these related works, our research focuses on assessing the trade-off between rate, distortion, and the efficacy of a detector in distinguishing between a normal and an anomalous source in both supervised and unsupervised scenarios. Unlike traditional binary classification, anomaly detection assumes samples of the positive class (anomalies) as rare events leading the compression mechanism to be only adapted to the negative class (normal signal). This scenario, to our knowledge, has not been explored previously.

\section{Gaussian framework}
\label{sec:gaussian-framework}

The difference between \eqref{eq:rate-distortion-distinguishability} and \eqref{eq:rate-distortion} consists of the additional constraint that allows the encoder to be tuned to increase distinguishability. Solving \eqref{eq:rate-distortion-distinguishability} for generic sources is extremely difficult -- certainly more difficult than solving \eqref{eq:rate-distortion}.  For this reason, we address the problem by relying on some assumptions. 

Using $\Gauss{\vec \mu}{\cov}$ to indicate a multivariate Gaussian distribution with an average vector $\vec{\mu} \in \RR^{n}$ and a covariance matrix $\cov \in \RR^{n\times n} $, we consider $\xok\sim\Gauss{\mok}{\covok}$ and $\xko\sim\Gauss{\mko}{\covko}$  with,  in general, $\mok\neq\mko$ and $\covok\neq\covko$.

As a reference scenario, we assume $\mok=\mko=\vec{0}$, $\tr(\covok) = n$, and $\tr(\covko) = \alpha n$, where $\tr(\cdot)$ stands for matrix trace. 
In this setting, on average, each sample contributes with energy $1$ or $\alpha$ for normal and anomalous vectors, respectively.
Since there always exists an orthonormal transformation that makes $\covok$ diagonal, we consider $\covok={\rm diag}\left(\lambdavok \right)$, with $\lambdavok=\left(\lambdaok_0,\dots,\lambdaok_{n-1}\right)$ and $\lambdaok_0\ge\lambdaok_1\ge\dots\ge\lambdaok_{n-1}\ge 0$.

With such assumptions, from \cite{Marchioni_2024TSMC} it is known that the average anomaly is white, i.e., $\Exp[\covko] = \alpha \mat{I}_n = \white$. Moreover, the average distinguishability $\Exp[Z]$ and $\Exp[J]$ are related to those computed for the white anomaly $Z_{\white}$ and $J_{\white}$. In detail, according to Jensen's inequality $Z_{\white} \le \Exp[Z]$ while $J_{\white} \le \Exp[J]$ as shown in \cite{Marchioni_2024TSMC}. As a result, in \eqref{eq:rate-distortion-distinguishability} we can rely on  $Z_{\white}$ as a lower bound of $\Exp[Z]$.

For the constraint on $J$ in \eqref{eq:rate-distortion-distinguishability}, it is possible to solve that problem within the considered Gaussian framework by focusing on the anomalies for which $\covko={\rm diag}\left( \lambdavko \right)$. Here we consider
$\lambdavko=\left(\lambdako_0,\dots,\lambdako_{n-1}\right)$.

Though we will achieve a wider generality, we initially assume that encoding is {\em Gaussian-additive}, i.e., that $\fxhgx$ is such that $\Deltav=\xh-\x$ is a random vector made of zero-mean components that are independent of each other but jointly Gaussian with those of $\x$ (and thus $\xh$), where the  covariance matrices of all triads $\hat{x}^{\ok}_j,
x^{\ok}_j,
\Delta_j$ are
\begin{equation}
\label{eq:gaussian-additive-encoding}
\cov_{
\left(\hat{x}^{\ok}_j,
x^{\ok}_j,
\Delta_j\right)
}=
\begin{pmatrix}
\lambdaok_j-\theta_j & \lambdaok_j-\theta_j & 0 \\
\lambdaok_j-\theta_j & \lambdaok_j & -\theta_j\\
0 & -\theta_j & \theta_j
\end{pmatrix}
\end{equation}
\noindent for some variances $0\le\theta_j\le\lambdaok_j$ for $j=0,\dots,n-1$ that are the degrees of freedom defining the encoding.

For Gaussian-additive encodings, we may give two lemmas. 

\begin{lemma}
\label{lem:Gaussadd}
If $\xok$ is made of independent Gaussian components with zero mean and variance $\lambdaok_j$, then applying a
Gaussian-additive encoding with variances $\theta_j$ for $j=0,\dots,n-1$ yields the rate 
\begin{equation}
I\left(\xhok;\xok\right) = \frac{1}{2}\sum_{j=0}^{n-1}\log_2\frac{\lambdaok_j}{\theta_j}
\end{equation}
\noindent that is the minimum achievable by any encoding such that $\Exp\left[\left(\xhok_j-\xok_j\right)^2\right]=\theta_j$ for $j=0,\dots,n-1$.
\end{lemma}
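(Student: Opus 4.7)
The plan is to split the argument into two pieces matching the two claims of the lemma: (i) a direct computation of $I(\xhok;\xok)$ for the Gaussian-additive encoding specified by~\eqref{eq:gaussian-additive-encoding}, and (ii) a lower bound showing that no other encoding achieving $\Exp[(\xhok_j-\xok_j)^2]=\theta_j$ for every $j$ can do better.

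For (i) I would exploit the backward-channel reading of~\eqref{eq:gaussian-additive-encoding}. The zero off-diagonal entry between $\xhok_j$ and $\Delta_j$, combined with joint Gaussianity, implies that $\xhok_j\sim\Gauss{0}{\lambdaok_j-\theta_j}$ and $\Delta_j\sim\Gauss{0}{\theta_j}$ are independent and $\xok_j=\xhok_j-\Delta_j$. Since the components of $\xok$ are independent by hypothesis and those of $\Deltav$ are mutually independent and jointly Gaussian with $\xok$, the index-$j$ triples decouple: $\xhok$ becomes a vector of independent Gaussians and $\Deltav$ becomes independent of $\xhok$ as a whole. Then $h(\xok\mid\xhok)=h(\xhok-\Deltav\mid\xhok)=h(-\Deltav)=\frac{1}{2}\sum_j\log_2(2\pi e\,\theta_j)$, while $h(\xok)=\frac{1}{2}\sum_j\log_2(2\pi e\,\lambdaok_j)$, and subtracting gives the stated rate.

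For (ii) I would follow the classical parallel-Gaussian argument. For any admissible joint law of $(\xok,\xhok)$, the chain rule together with the fact that conditioning reduces entropy yields $h(\xok\mid\xhok)\le\sum_j h(\xok_j\mid\xhok_j)$, and since $h(\xok)=\sum_j h(\xok_j)$ by source independence, we get $I(\xok;\xhok)\ge\sum_j I(\xok_j;\xhok_j)$. For each scalar factor I would use translation invariance together with conditioning-reduces-entropy and the Gaussian max-entropy principle: $h(\xok_j\mid\xhok_j)=h(\xok_j-\xhok_j\mid\xhok_j)\le h(\xok_j-\xhok_j)\le\frac{1}{2}\log_2(2\pi e\,\theta_j)$, whence $I(\xok_j;\xhok_j)\ge\frac{1}{2}\log_2(\lambdaok_j/\theta_j)$, and summing recovers the target.

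The delicate step in my view is the decoupling claim in (i): I need to combine the per-triple covariance data of~\eqref{eq:gaussian-additive-encoding} with the stipulated cross-index independence of the $\Delta_j$ to conclude that the full $\Deltav$ is independent of the full $\xhok$, not merely componentwise. Once this is settled from joint Gaussianity plus the vanishing cross-covariances implied by the independence of the $\xok_j$ and the $\Delta_j$, the remainder of (i) is routine and (ii) is a textbook execution of Gaussian rate-distortion reasoning that does not require any further assumption on the encoding being Gaussian or additive.
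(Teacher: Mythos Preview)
Your proof is correct, but organizes the argument differently from the paper. The paper works at the vector level throughout: it writes $I(\xhok;\xok)=H(\xok)-H(\xok-\xhok\mid\xhok)\ge H(\xok)-H(-\Deltav)$, then bounds $H(-\Deltav)$ first by the Gaussian max-entropy principle (given the full covariance $\cov^{\Deltav}$) and second by Hadamard's inequality (given only its diagonal $\theta_0,\dots,\theta_{n-1}$), and finally observes that the Gaussian-additive encoding saturates every inequality in the chain, so achievability and converse are handled in one stroke. You instead separate achievability from the converse, and in the converse you decouple into scalar channels via the chain rule $I(\xok;\xhok)\ge\sum_j I(\xok_j;\xhok_j)$ before invoking Gaussian max-entropy coordinatewise, which sidesteps Hadamard entirely. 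Both routes are standard; yours is arguably more elementary since it avoids a matrix-determinant inequality, while the paper's is slightly more economical in that it never needs to compute $I(\xhok;\xok)$ directly for the Gaussian-additive scheme. The decoupling concern you flag in part~(i) is genuine but resolves exactly as you suggest: joint Gaussianity plus the assumed independence of the $\xok_j$ and of the $\Delta_j$ forces all cross-index covariances to vanish, so the full $\Deltav$ is independent of the full $\xhok$; the paper's proof implicitly relies on the same fact when it asserts that Gaussian-additive encoding meets the equality conditions.
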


\begin{proof}[Proof of Lemma \ref{lem:Gaussadd}]
Indicate with $H(\cdot)$ the differential entropy of a random vector, with $H(\cdot|\cdot)$ the conditional differential entropy between two random vectors, and replicate the well-known chain
\begin{align}
I(\xhok;\xok)
& = H\left(\xok\right) - H\left(\xok|\xhok\right)\\
& = H\left(\xok\right) - H\left(\xok-\xhok|\xhok\right)\\
& \ge H\left(\xok\right) - H\left(\xok-\xhok\right)
\end{align}
\noindent where the inequality depends on the fact that conditioning decreases entropy.

Set $\Deltav=\xhok-\xok$ and loosen the bound by noting that if $\cov^{\Deltav}$ is the covariance matrix of $\Deltav$, then $H\left(-\Deltav\right)$ is maximized when $\Deltav$ is a zero mean Gaussian vector with such a covariance. This reduces the above bound to
\begin{equation}
I\left(\xhok;\xok\right)\ge H\left(\xok\right)-\frac{n}{2}\log_2\left(2\pi e\right)-\frac{1}{2}\log_2\left|\cov^{\Deltav}\right|  
\end{equation}

To make the bound even looser, we may now notice that the variances $\theta_j$ of the single components of $\Deltav$ are given, and also the diagonal elements of $\cov^{\Deltav}$ are given. Hence, by Hadamard's inequality, the determinant $\left|\cov^\Delta\right|$ is maximum when all other entries of $\cov^{\Deltav}$ are zeros, and thus the components of $\Deltav$ are independent of each other.

All the conditions given above to yield the minimum possible lower bound on the rate, are met by a Gaussian additive encoding with those $\theta_j$. Hence, such an encoding yields the minimum possible rate that is
\begin{equation}
H\left(\xok\right)-H\left(-\Deltav\right)=
\frac{1}{2}\sum_{j=0}^{n-1}\log_2\frac{\lambdaok_j}{\theta_j} 
\end{equation}

By applying such an encoding, we have $\Exp\left[\left(\xhok_j-\xok_j\right)^2\right]=\theta_j$ for $j=0,\dots,n-1$ and thus
\begin{equation}
D=\sum_{j=0}^{n-1}\theta_j
\end{equation}
\end{proof}

\begin{lemma}
\label{lem:GDzk}
For a Gaussian-additive encoder with parameters $\theta_j$ for $j=0,\dots,n-1$, we have
\begin{align}
\label{eq:GD}
D & = n-\sum_{j=0}^{n-1}\lambdaok_j\xi_j\\
\label{eq:Gz}
Z & = \frac{1}{2\ln 2}\left|\sum_{j=0}^{n-1}\stack_j\xi_j\right| 
\\
\label{eq:Gk}
J &=
\frac{1}{2\ln 2}
\sum_{j=0}^{n-1}
\frac{\stack_j^2\xi_j^2}{1-\stack_j\xi_j}
\end{align}
\noindent where
$\stack_j=1-\nicefrac{\lambdako_j}{\lambdaok_j}$ 
and
$\theta_j=\lambdaok_j\left(1-\xi_j\right)$
so that $\xi_j\in[0,1]$ for $j=0,\dots,n-1$ are the normalized degrees of freedom defining the encoder.

Both $Z$ and $J$ are convex functions of the $\xi_j$.
\end{lemma}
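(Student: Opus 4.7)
The plan is to reduce all quantities to per-coordinate univariate Gaussian calculations, exploiting the fact that $\xok$, $\xko$, and the additive noise introduced by the encoder all have diagonal covariance. First I would read $D$ off the definition: the bottom-right entry of the block in \eqref{eq:gaussian-additive-encoding} gives $\Exp[\Delta_j^2]=\theta_j$, so $D=\sum_{j=0}^{n-1}\theta_j$. Substituting $\theta_j=\lambdaok_j(1-\xi_j)$ and invoking the reference assumption $\tr(\covok)=\sum_j\lambdaok_j=n$ produces \eqref{eq:GD}.

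For the distinguishability terms I need both $\fokxh$ and $\fkoxh$. From the $(\xhok_j,x^{\ok}_j)$-block of \eqref{eq:gaussian-additive-encoding}, standard bivariate-Gaussian conditioning yields the encoder kernel $f_{\xh_j|x_j}=\Gauss{\xi_j x_j}{\lambdaok_j\xi_j(1-\xi_j)}$, i.e., the encoder attenuates each coordinate by $\xi_j$ and adds independent Gaussian noise of variance $\lambdaok_j\xi_j(1-\xi_j)$. Passing $\xko_j\sim\Gauss{0}{\lambdako_j}$ through this kernel gives $\xhko_j\sim\Gauss{0}{\lambdahko_j}$ with $\lambdahko_j=\xi_j^2\lambdako_j+\lambdaok_j\xi_j(1-\xi_j)$, which a short algebraic rewrite puts in the compact form $\lambdahko_j=\lambdahok_j(1-\stack_j\xi_j)$ with $\lambdahok_j=\lambdaok_j\xi_j$. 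Both $\fokxh$ and $\fkoxh$ are thus products of independent zero-mean Gaussians, so the computations of $Z$ and $J$ split into sums over $j$.

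With this decomposition the two identities fall out of standard formulas. For $Z$ I would rewrite the integrand as $-H(\xhok)+H_{\times}(\xhko,\xhok)$ and use the closed forms for the differential entropy and cross-entropy of zero-mean Gaussians; the $\log_2(2\pi\lambdahok_j)$ terms cancel between the two pieces, leaving $\frac{1}{2\ln 2}\sum_j(\lambdahko_j/\lambdahok_j-1)=-\frac{1}{2\ln 2}\sum_j\stack_j\xi_j$, whose absolute value is \eqref{eq:Gz}. For $J$ I would recognize it as the Jeffreys (symmetric KL) divergence between $\fokxh$ and $\fkoxh$ and invoke the one-dimensional identity $\DKL{\Gauss{0}{a}}{\Gauss{0}{b}}+\DKL{\Gauss{0}{b}}{\Gauss{0}{a}}=(a-b)^2/(2ab\ln 2)$ in bits, then substitute $a=\lambdahok_j$ and $b=\lambdahok_j(1-\stack_j\xi_j)$ to obtain \eqref{eq:Gk} after cancellation of $(\lambdahok_j)^2$.

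Convexity is then almost a corollary. $Z$ is $|L(\xi)|/(2\ln 2)$ with $L$ affine in the $\xi_j$, hence convex. For $J$, each summand has the form $g(\stack_j\xi_j)/(2\ln 2)$ with $g(u)=u^2/(1-u)$; two differentiations give $g''(u)=2/(1-u)^3>0$ on the admissible range $u=\stack_j\xi_j<1$, so each summand is convex in its own $\xi_j$ and the full sum, which separates over coordinates, is jointly convex. The main obstacle I anticipate is purely bookkeeping --- correctly reading the encoder kernel off the $3\times 3$ block in \eqref{eq:gaussian-additive-encoding} and tracking the $1/\ln 2$ factor that arises when $\log_2$ meets the KL identity --- not any deeper difficulty.
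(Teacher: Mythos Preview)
Your proposal is correct and follows essentially the same route as the paper: reduce everything to per-coordinate zero-mean Gaussian computations, identify the variance ratio $\sigma^2_{\xhko_j}/\sigma^2_{\xhok_j}=1-\stack_j\xi_j$, and read off $D$, $Z$, $J$ from standard Gaussian entropy/KL identities, with the convexity argument handled identically. The only cosmetic difference is that the paper outsources the encoder-kernel derivation and the $Z$/$J$ formulas to Lemmas~1--3 of \cite{Marchioni_2024TSMC}, whereas you derive them directly from the $3\times 3$ covariance block and the closed-form Gaussian cross-entropy and Jeffreys divergence; your version is therefore self-contained but otherwise the same argument.
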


\begin{proof}[Proof of Lemma \ref{lem:GDzk}]
If we assume Gaussian-additive encoding, we have that $\xhok=\xok+ \Deltav$ is made of independent zero-mean Gaussian components with variance $\sigma^2_{\xhok_j}=\lambdaok_j-\theta_j$ for $j=0,\dots,n-1$.
The results of \cite{Marchioni_2024TSMC} can be straightforwardly applied to this slightly generalized context, yielding the encoding $\fxhgx$ \cite[Lemma 1]{Marchioni_2024TSMC} and its application to the components of $\xko$ \cite[Lemma 2]{Marchioni_2024TSMC}.

This yields that $\xhko$ is made of independent, zero-mean, Gaussian components with variances
\begin{equation}
\sigma^2_{\xhko_j}=\left(1-\frac{\theta_j}{\lambdaok_j}\right)
\left[\lambdako_j\left(1-\frac{\theta_j}{\lambdaok_j}\right) + \theta_j\right]
\end{equation}
\noindent for $j=0,\dots,n-1$.

Finally, \cite[Lemma 3]{Marchioni_2024TSMC} allows to compute both $Z$ and $J$ that turn to be a function of the ratios 
\begin{equation}
    u_j=\frac{\sigma^2_{\xhko_j}}{\sigma^2_{\xhok_j}}=1-\stack_j\xi_j
\end{equation}

In particular, the application of \cite[Lemma 3]{Marchioni_2024TSMC} to the definitions of $Z$ and $J$ leads to \eqref{eq:Gz} and \eqref{eq:Gk}.

Both $Z(\xiv)$ and $J(\xiv)$ are convex functions of $\xiv=(\xi_0,\dots,\xi_{n-1})$. In fact, $Z$ is linear for $\sum_{j=0}^{n-1}\stack_j\xi_j\ge 0$ and for $\sum_{j=0}^{n-1}\stack_j\xi_j< 0$. Moreover, since $\stack_j\xi_j < 1$, $\nicefrac{\left(\stack_j\xi_j\right)^2}{1-\stack_j\xi_j}$ is a convex function for $j = 0, \dots, n-1$. Hence, $J$, being a sum of convex functions, is convex.


In the new degrees of freedom, distortion is
\begin{equation}
D=\sum_{j=0}^{n-1}\theta_j=\sum_{j=0}^{n-1}\lambdaok_j(1-\xi_j)=n-\sum_{j=0}^{n-1}\lambdaok_j\xi_j
\end{equation}
\end{proof}

Assume now that a distortion and a distinguishability constraint are given.
Lemma \ref{lem:Gaussadd} says that starting from a feasible encoding $\fxhgx'$, one may compute $\theta_j=\Exp\left[\left(\xhok_j-\xok_j\right)^2\right]$ and then consider the Gaussian-additive encoder $\fxhgx''$ with those parameters.
If $\fxhgx''$ satisfies both constraints, then it yields a rate that is not greater than the one yielded by $\fxhgx'$.
Yet, since $D=\sum_{j=0}^{n-1}\theta_j$ independently of the specific underlying encoder, the only constraint one has to re-check when moving from $\fxhgx'$ to $\fxhgx''$ is the distinguishability one.

To formalize these considerations, we use \eqref{eq:Gz}, \eqref{eq:Gk}, and \eqref{eq:GD} and define the two subsets of $[0,n]\times\RR^+$ 
\begin{align}
\label{eq:Hz}
H_Z & = \left\{
(\delta,\omega) | \exists \xi_0,\dots, \xi_{n-1}\;\; \eqref{eq:GD}=\delta \wedge
\eqref{eq:Gz}=\omega
\right\}\\
\label{eq:Hk}
H_J & = \left\{
(\delta,\omega) | \exists \xi_0,\dots, \xi_{n-1}\;\; \eqref{eq:GD}=\delta \wedge
\eqref{eq:Gk}=\omega
\right\}.
\end{align}

Due to its special features, when $(\delta,\omega)\in H_{Z|J}$, the corresponding rate-distortion-distinguishability problem \eqref{eq:rate-distortion-distinguishability} is solved by a Gaussian-additive encoding.
Out of those domains, such a property cannot be derived but must be listed as an assumption along with the Gaussian nature of the normal and anomalous sources.

Whatever the level of generality of the derivations that follow, the RDD trade-off we are investigating depends on the solution of \eqref{eq:rate-distortion-distinguishability} rewritten in terms of the normalized degrees of freedom $\xi_0,\dots,\xi_{n-1}$, i.e., 
\begin{eqnarray}
\label{eq:q}
\rho =	& \min_{\xi_0,\dots,\xi_{n-1}} & -\frac{1}{2\ln 2}\sum_{j=0}^{n-1}\ln\left(1-\xi_j\right)\\
\label{eq:c0}
	& \text{s.t.} & 0\le \xi_j \le 1\\
\label{eq:c1}
& \text{s.t.} & \sum_{j=0}^{n-1}\lambdaok_j\xi_j\ge n-\delta\\
\label{eq:c2-a}	
\text{either}
	& \text{s.t.} & 
	\left|\sum_{j=0}^{n-1}\stack^{\white}_j\xi_j\right|\ge 2\dist \ln 2\\
\label{eq:c2-b}
\text{or}
& \text{s.t.} & \sum_{j=0}^{n-1}\frac{\left(\stack_j\xi_j\right)^2}{1-\stack_j\xi_j}\ge 2\dist \ln 2
\end{eqnarray}
where we should consider only one of the two constraints \eqref{eq:c2-a} and \eqref{eq:c2-b} at a time and with $\stack^{\white}_j=1-\nicefrac{\alpha}{\lambdaok_j}$ in case of \eqref{eq:c2-a}. Note that to derive \eqref{eq:c2-a} the previously described relation $Z_{\white} \leq \Exp[Z]$ has been exploited. 
Table~\ref{tab:assumptions} summarizes the assumptions under which the rate-distortion-distinguishability problem in \eqref{eq:q}-\eqref{eq:c2-b} is valid. 
The assumption on the normal signal does not bring a loss of generality. In the anomaly-agnostic case, we only assume the knowledge of scaling parameter $\alpha$ that sets the average energy of the anomaly w.r.t. that of the normal signal. In the anomaly-aware case, the considered anomalies are characterized by a diagonal covariance matrix.

\begin{table}[]
    \centering
    \caption{Assumptions on the signals and the encoder on which the problem in \eqref{eq:q}-\eqref{eq:c2-b} relies.}
    \begin{tabular}{lllll}
        \toprule
        \multirow{2}{*}{} \multirow{1}{*}{}
        Signals & & $\muv$ & $\cov$ & $\tr(\cov)$ \\
        $\vec{x} \sim \Gauss{\muv}{\cov}$ & & & & \\
        \midrule
        $\ok$ & & $\vec{0}$ & $\diag\left( \lambdavok \right)$ & $n$ \\
        \multirow{2}{*}{$\ko$} &
        RDD Z
        & $\vec{0}$ & $\covko$ & $\alpha n$ \\
         & 
         RDD J
         & $\vec{0}$ & $\diag\left( \lambdavko\right)$ & $\alpha n$ \\
         \midrule \midrule
        Encoder & \multicolumn{4}{c}{Gaussian-additive if $(\delta,\omega)\notin H_{Z|J}$}\\
        \bottomrule
    \end{tabular}
    \label{tab:assumptions}
\end{table}

\subsection{Solution to the optimization problems}

If we concentrate on \eqref{eq:c2-a}, we have that the constraint can be expressed as a disjunction
\begin{equation}
\sum_{j=0}^{n-1}\stack^{\white}_j\xi_j \ge  2\dist \ln 2
\;\;\;\cup\;\;\;
\sum_{j=0}^{n-1}\stack^{\white}_j\xi_j \le -2\dist \ln 2
\end{equation}

The disjoint form allows us to find the minimum rate solution by solving two separate maximization problems in which the merit function is concave, the constraints are linear, and the feasibility space is convex.

As already proven, the left-hand side in \eqref{eq:c2-b} is a convex expression that must be paired with the $\leq$ relation symbol to satisfy the convex programming rules. In our case, the presence of $\geq$ relation symbol makes the problem a so-called reverse convex optimization problem \cite{Hillestad_1980AMO}. This problem can be solved heuristically by the algorithm proposed in \cite[Algorithm 5.1]{Tuy_JOTA1987}. With this approach, the problem is reduced into a sequence of sub-problems each of which consists in maximizing a convex function. We rely on the Disciplined Convex-Concave Programming (DCCP) framework \cite{Shen_2016CDC} to solve each subproblem.

\captionsetup[subfigure]{width=0.9\columnwidth}
\begin{figure*}
\centering
  \subfloat[Rate-distinguishability and rate-distortion trade-offs in the anomaly-agnostic case with $\alpha=1$.]{\includegraphics[]{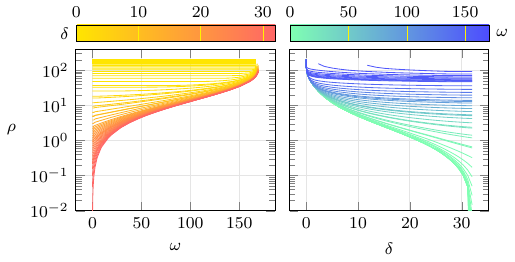} \label{fig:RDD-tradeoffs-unsupervised-white-omega}}
  \subfloat[Rate-distinguishability and rate-distortion trade-offs in the anomaly-aware case with $\covko=\mat{I}_n$.]{\includegraphics[]{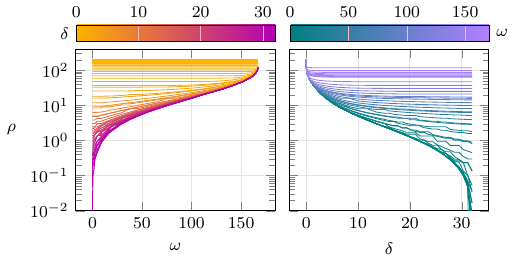} \label{fig:RDD-tradeoffs-supervised-white-omega}}
\caption{Solution to the RDD Pareto problems.}
\label{fig:RDD-cuts-white-omega}
\end{figure*}

\section{Numerical Evidence}
\label{numerical-evidence}

\captionsetup[subfigure]{width=0.45\columnwidth}
\begin{figure}
\centering
  \subfloat[Pareto surface in the anomaly-agnostic case with $\alpha=1$.]{\includegraphics[width=0.42\columnwidth]{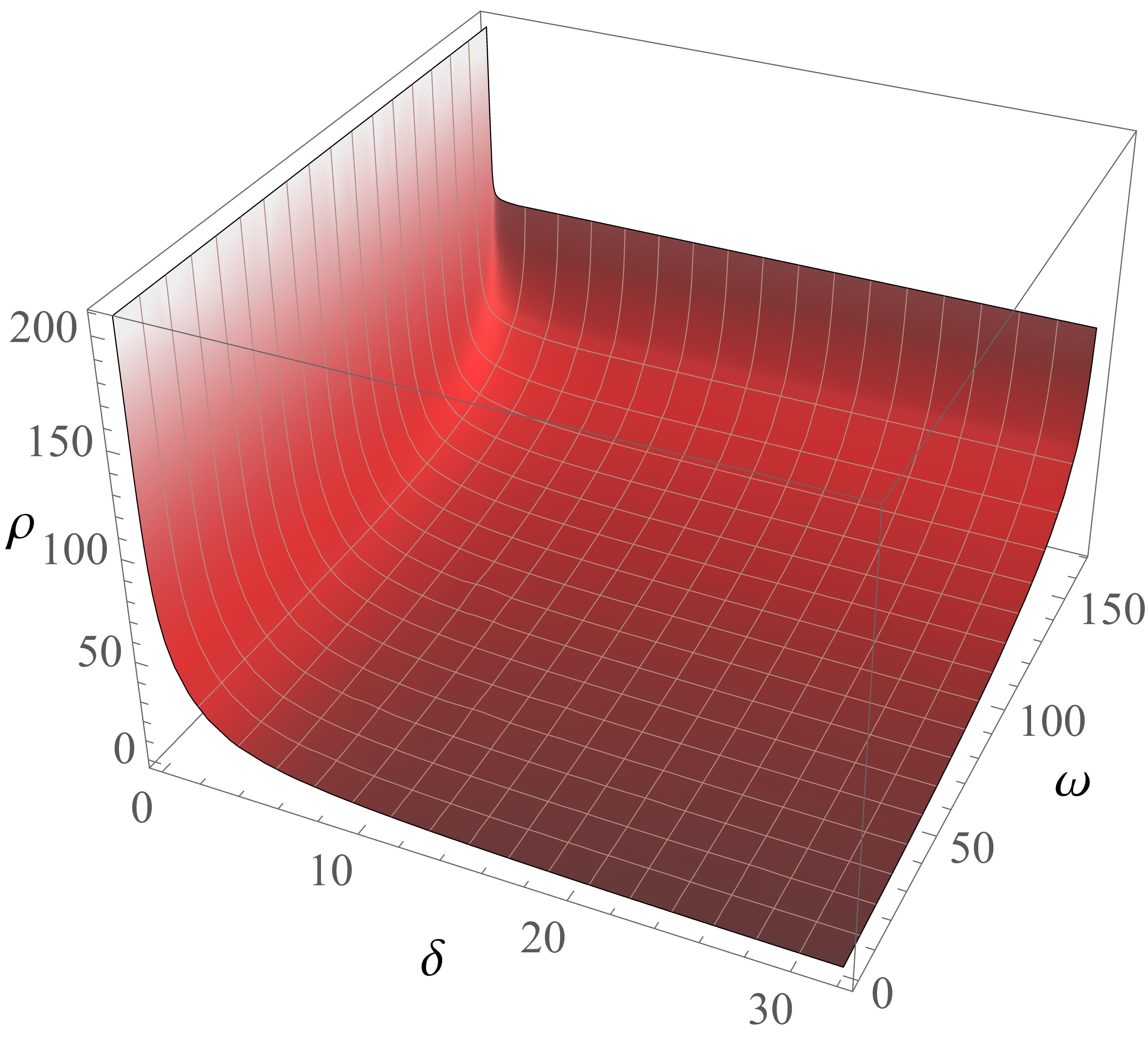} \label{RDD-surface-unsupervised-white-omega}}\hspace{3mm}
  \subfloat[Pareto surface in the anomaly-aware case with $\covko=\mat{I}_n$.]{\includegraphics[width=0.42\columnwidth]{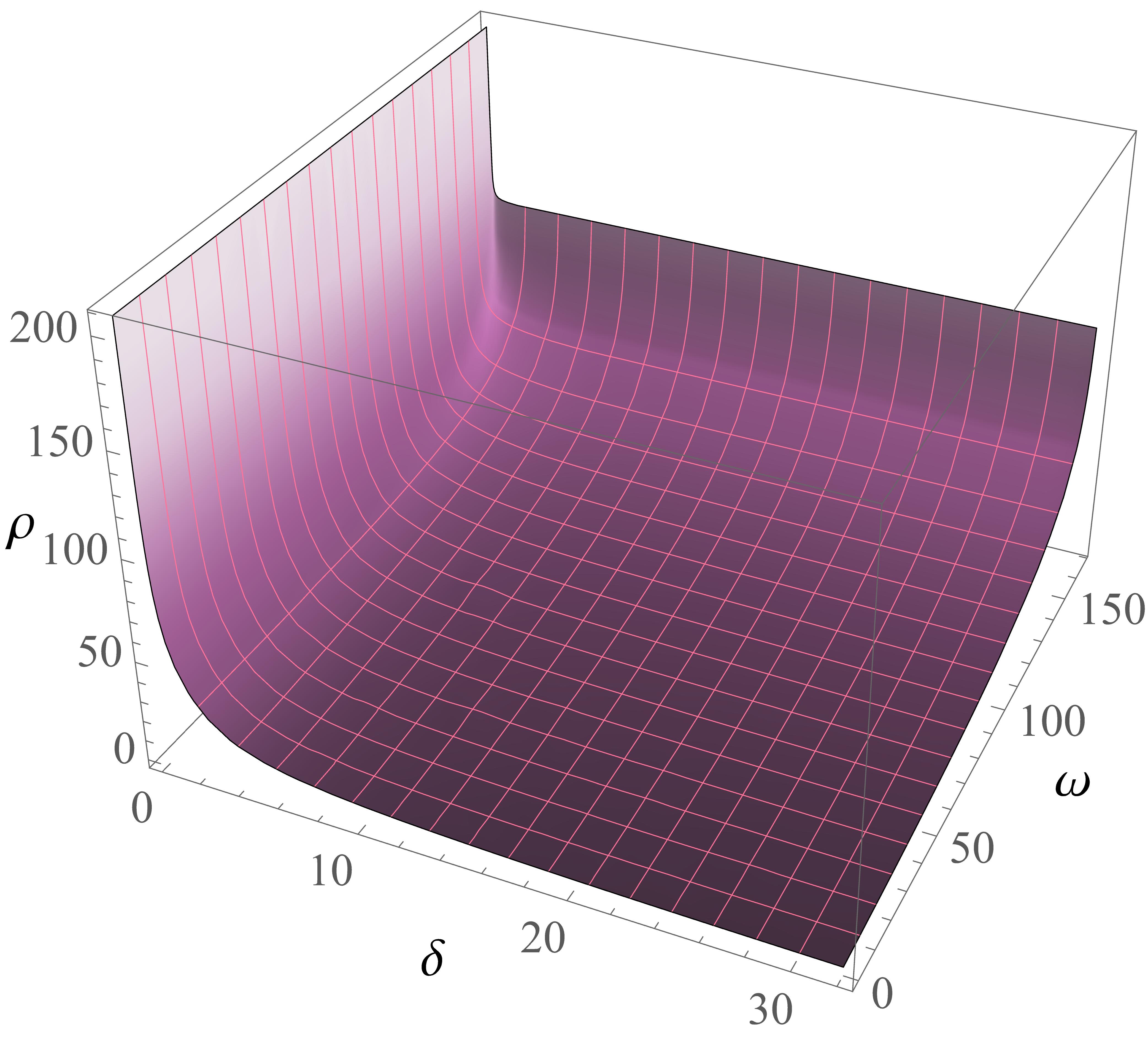} \label{RDD-surface-supervised-white-omega}}
\caption{Rate-distortion-distinguishability Pareto surfaces.}
\label{fig:RDD-surface-white-omega}
\end{figure}

\captionsetup[subfigure]{width=0.9\columnwidth}
\begin{figure*}[t!]
\centering
  \subfloat[Rate-distinguishability and rate-distortion trade-offs for the LD.]{\includegraphics[]{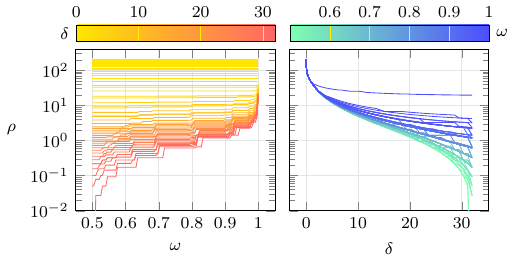} \label{fig:RDD-cuts-unsupervised-white-PD}}
  \subfloat[Rate-distinguishability and rate-distortion trade-offs for the NPD.]{\includegraphics[]{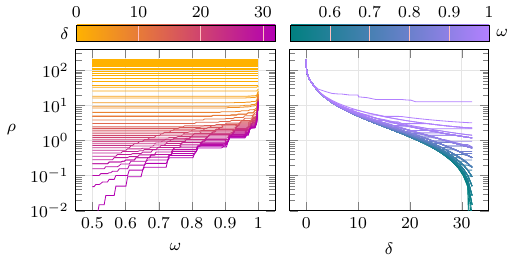} \label{fig:RDD-cuts-supervised-white-PD}}
\caption{Rate-distortion-distinguishabilty trade-off for the white anomaly and realistic detectors.}
\label{fig:RDD-white-PD}
\end{figure*}

\begin{figure}
    \centering
    \includegraphics[]{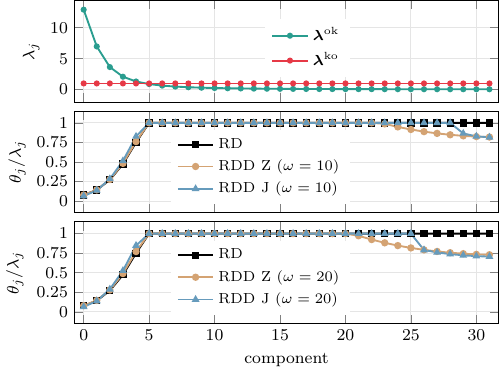}
    \caption{Signal components relative distortion profiles for the rate-distortion (RD) and rate-distortion-distinguishability (RDD) problems with $\delta=10$.} 
    \label{fig:RDD-profiles}
\end{figure}

The numerical evidence consists of four steps. We start showing and discussing the solution of the optimization problem in \eqref{eq:q}-\eqref{eq:c2-b} that assumes involved signals as Gaussian and the encoder as Gaussian-additive (see Table~\ref{tab:assumptions}).
Then, we progressively relax those assumptions with three examples:
\begin{enumerate*}[label=(\roman*), font=\itshape]
    \item a compression scheme based on dimensionality reduction that allows us to deal with Gaussian signals;
    \item a variation of the JPEG compression standard in which we can exploit \eqref{eq:q}-\eqref{eq:c2-b} to optimize distortion and detection jointly;
    \item a compression scheme based on an autoencoder that exploits a regularization term to enhance detection performance.
\end{enumerate*}

\subsection{Solution to the Pareto problems}

First, we present the solutions of both RDD Pareto problems where the setup for signal generation is identical to \cite{Marchioni_2024TSMC}. In particular, $n=32$, $\covok$ is a diagonal matrix with the $\lambdavok$ profile displayed in Fig.~\ref{fig:RDD-profiles} (top plot). The solution to the RDD minimization problem in \eqref{eq:q}-\eqref{eq:c2-b} by considering \eqref{eq:c2-a}, i.e., the anomaly-agnostic scenario, is done by selecting $\alpha=1$ while the anomaly-aware case, i.e., by considering \eqref{eq:c2-b} as a detectability constraint, uses $\covko=\mat{I}_n$ (see Fig.~\ref{fig:RDD-profiles} for the profiles of $\lambdavko$). 

The solutions to RDD Pareto problems are presented in Fig.~\ref{fig:RDD-tradeoffs-unsupervised-white-omega} and Fig.~\ref{fig:RDD-tradeoffs-supervised-white-omega} for \eqref{eq:c2-a} and \eqref{eq:c2-b} constraint choice, respectively.
 These figures display distinguishability-rate curves for different distortion levels and distortion-rate curves for different distinguishability levels. In both anomaly-aware and anomaly-agnostic scenarios, they highlight that, for a constant rate, the distinguishability increases with increasing distortion.
Meanwhile, given a distortion, distinguishability improves with higher rates. We also present the Pareto surface in Fig.~\ref{fig:RDD-surface-white-omega} to further highlight these trends.

Since $Z$ and $J$ cannot be directly compared, we measure the distinguishability performance in terms of $P_{\rm det}$ to compare anomaly-agnostic and anomaly-aware scenarios.
To estimate $P_{\rm det}$, we first generate $N = 10^4$ examples of $\xhok$ and $\xhko$. Then, for each $i$-th example, we compute the anomaly score with the likelihood-based detector (LD) $s(\xh_i) = -\log \fokxh(\xh_i)$ for the anomaly-agnostic case and the Neyman-Pearson detector (NPD) $s(\xh_i)=\log\fkoxh(\xh_i)-\log\fokxh(\xh_i)$ for the anomaly-aware scenario. Finally, normal and anomalous scores allow us to estimate $\AUC$ \cite{Fawcett_PATREC2006}. 
The results are shown in Fig.~\ref{fig:RDD-white-PD} for the agnostic and aware scenarios. It is evident how, in the anomaly-aware scenario, the same level of distinguishability, measured in terms of $P_{\rm det}$, can be obtained with a lower rate. This is expected since the detector can leverage the side information on the anomaly.

While Fig.~\ref{fig:RDD-white-PD} compares performances, we can visualize how the optimization problem in \eqref{eq:q}-\eqref{eq:c2-b} affects the compression mechanism in Fig.~\ref{fig:RDD-profiles}. As mentioned before, the top plot shows the eigenvalues profiles of the normal ($\lambdavok$) and anomalous ($\lambdavko$) signals, while the two plots below display the relative distortion the compressor applied to the input components depending on the active constraints in the RDD problem for a distortion level of $\delta=10$ and two different values of the distinguishability constraint $\omega$. We refer to RD when only \eqref{eq:c0} and \eqref{eq:c1} are active, i.e., $\omega=0$, which is equivalent to the traditional rate-distortion problem in \eqref{eq:rate-distortion} with Gaussian sources whose solution is presented in \cite[Theorem 13.3.3]{Cover_1991}. Instead, we refer to RDD Z and RDD J when also \eqref{eq:c2-a} or \eqref{eq:c2-b} is active, i.e., the anomaly-agnostic and anomaly-aware scenarios, respectively.
In the RD case, the compressor distorts the components corresponding to the largest eigenvalues in $\lambdavok$ (principal components) of a fixed amount and completely distorts the remaining minor components. In other words, the compressor sacrifices the minor components to allocate more rate to those carrying more information. As a result, the relative distortion $\theta_j/\lambda_j$ increases with $j$ up to reach $1$, i.e., complete distortion. The RDD cases are similar except that the compressor frees some rate from the principal components to allocate it to the components with the lowest energy. This phenomenon is more evident when the distinguishability constraint is stricter, i.e., a higher $\omega$. The difference between RDD Z and RDD J is that in the latter case, the compressor can leverage the knowledge of the anomaly distribution to allocate some rate where $\lambdavok$ and $\lambdavko$ differ most.

The fact that the components with the lowest variance are the most informative for detection is consistent with the theoretical results presented in \cite{Tax_ICONIP2003}, where the authors have shown that retaining the components with the lowest variance during dimensionality reduction is more beneficial for detection than retaining the major ones. This has been empirically confirmed for image data \cite{Rippel_TIM2021} and time series \cite{Marchioni_ITJ2020}.

\begin{figure}
\centering
    \includegraphics[scale=0.3]{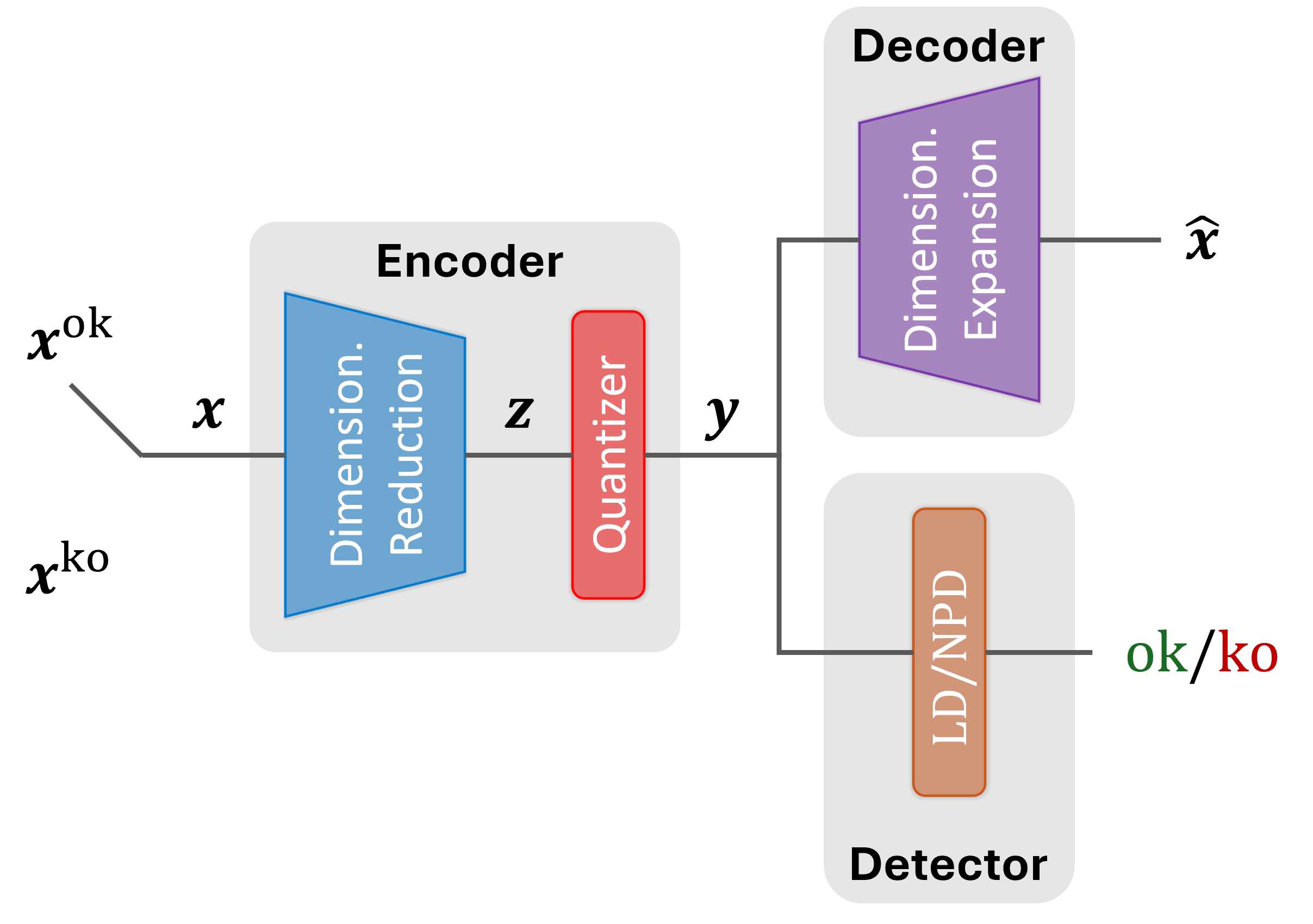} 
    \caption{Block diagram of the compression based on Random Component Selection. }
    \label{fig:RCS-block-diagram}
\end{figure}

\begin{figure*}[t!]
\centering
  \subfloat[Rate-distinguishability and rate-distortion trade-offs for the LD.]{\includegraphics[]{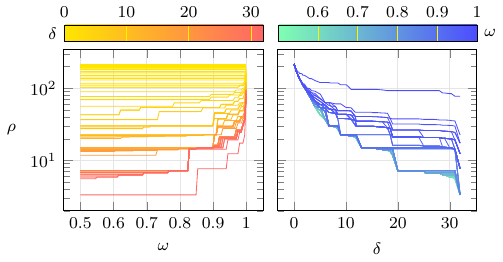} \label{RCS-tradeoffs-unsupervised-white}}
  \subfloat[Rate-distinguishability and rate-distortion trade-offs for the NPD.]{\includegraphics[]{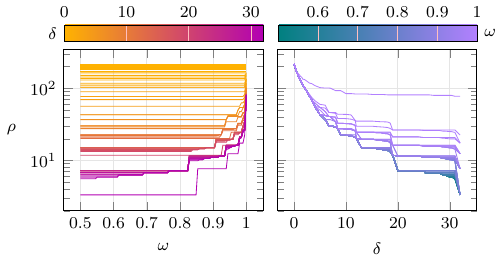} \label{RCS-tradeoffs-supervised-white}}
\caption{Rate-distortion-distinguishability trade-off administered by RCS compressor in case of white anomaly.}
\label{fig:RCS-white}
\end{figure*}

\subsection{Random Component Selection}

The RDD optimization problem is based on assumptions on signals and encoder. Here, we relax the assumption on the encoder by employing a KLT-based dimensional reduction followed by a quantization stage as depicted in Fig.~\ref{fig:RCS-block-diagram}. Here, the dimensionality reduction contributes to the distortion by projecting the input $\x$ onto the subspace spanned by $m<n$ selected components, i.e., eigenvectors of $\covok$. On the other hand, the quantization stage has the purpose of limiting the rate. When the principal components are selected, this scheme is equivalent to PCA-based compression \cite{Burrello_JIoT2020, Marchioni_JIOT2023}, however, here we also explore the possibility of picking components that are different from the principal.
Since dimensionality reduction is a linear operation and the adopted quantization can be modeled as a sum of Gaussian variables, the compressed signals are still Gaussian.

The dimensionality reduction that minimizes distortion is based on PCA and selects the $m$ components corresponding to the largest eigenvalues of $\covok$. The introduction of distinguishability constraint prevents an analytic solution. Hence, we address the joint optimization of distortion and distinguishability with a heuristic approach that randomly selects $m$ components.
Specifically, we model the selection with a vector $\m$ that contains the indices of the $m$ randomly selected components of $\x$ so that the resulting signal $\z = (x_j | j\in \m)$.
Given that this procedure is linear and $\x$ is Gaussian, $\z$ is Gaussian. For instance, depending on the source of the input vector ($\x=\xok$ or $\x=\xko$), $\z$ is characterized by the following covariance matrices
\begin{equation}
    \covokxh = {\rm diag}\left(\left(\lambdaok_j| j \in \m\right)\right),  \quad\quad  \covkoxh = \left[ \covko \right]_\m
\end{equation}
where $\left[ \cdot \right]_\m$ indicates an operator that takes a square matrix and produces a sub-matrix with the rows and columns indexed by $\m$. Note that when $\covko$ is diagonal $\left[ \covko \right]_\m ={\rm diag}\left(\left(\lambdako_j| j \in \m\right)\right)$.

The subsequent quantization stage is only intended to limit the rate that would otherwise be infinite \cite{Bell_1995NC}. Therefore, we use a Gaussian source quantizer that is optimal in the sense of rate-distortion, introducing negligible distortion fixed at $\epsilon \ll \lambdaok_{n-1}$. In this sense, the quantizer behaves as Gaussian-additive, producing a quantized vector $\y = \z + \Deltav$, where $\Deltav$ is a zero-mean Gaussian vector with independent components having the same variance $\theta_\epsilon = \epsilon / m$ \cite[Theorem 13.3.3]{Cover_1991}.

Finally, the decoder reconstructs the signal $\xh$ from the compressed vector $\y$ as
\begin{equation}
    \hat{x}_j = \begin{cases}
        \label{eq:RCA-reconstructed}
        y_i \;\; \text{with} \;\; i|m_i=j & j \in \m \\
        0 & j \notin \m.
    \end{cases}
\end{equation}

In the considered scheme, the resulting distortion $D$ is the sum of the dimensionality reduction and quantization contributions. The former consists of the sum of the eigenvalues corresponding to the discarded components, while the latter is described by $\epsilon$ that is a degree of freedom that we set to be negligible such that
\begin{equation}
   D = \sum_{j \notin \m}\lambdaok_j + \epsilon. 
   \label{eq:distortion-RCC}
\end{equation}

On the other hand, the rate is ruled by the quantization stage. In fact, the rate $\rho=I(\xhok;\xok)$ corresponds to $I(\yok;\xok)$ because $\xhok$ and $\yok$ share the same information. Moreover, since $\yok$ is obtained from $\zok$ that, in turn, is obtained from $\xok$, the information shared between $\yok$ and $\zok$ is the same as that shared between $\yok$ and $\xok$, hence $I(\yok;\xok) =
I_\epsilon(\yok;\zok)$. Therefore, exploiting \cite[Theorem 13.3.3]{Cover_1991}, the rate can be expressed as
\begin{equation}
\rho(\epsilon) = I_\epsilon(\xhok;\xok) = I_\epsilon(\yok;\zok) =
\frac{1}{2}\sum_{j \in \m}\log_2\frac{\lambdaok_j}{\theta_\epsilon}
\label{eq:rate-RCC}
\end{equation}
where we highlight the dependence on $\epsilon$, which is a degree of freedom\footnote{
We want to point out that the choice of $\epsilon$ is irrelevant as long as $\theta_\epsilon = \epsilon /m < \lambdaok_{n-1}$, which means that none of the $\zok$ components are completely distorted. To prove it we rely on an alternative expression of \eqref{eq:rate-RCC} \cite{Kolmogorov_TIT1956}
\begin{equation*}    
I_\epsilon(\yok;\zok) = \frac{m}{2} \log \frac{m}{2 \epsilon^2\pi e } + h(\zok) + o(1).
\end{equation*}
It is evident how the rate depends mainly on the dimensionality of the subspace $m$, while the differential entropy $h(\zok)$ accounts for the variability of the rate due to different combinations of $m$ components. The value of $\epsilon$ only determines an offset. In fact, $I_{\epsilon_2}(\xhok;\xok) = I_{\epsilon_1}(\xhok;\xok) - m \left( \log \nicefrac{1}{\epsilon_1} - \log \nicefrac{1}{\epsilon_2} \right)$.}.

We carried out numerical evidence in which we fixed $\epsilon = \nicefrac{\lambdaok_{n-1}}{10^2}$ and randomly selected $M=\min\left\{ {{n}\choose{m}}, 10^4\right\}$ vectors $\m$ for each $m \in \{1, \dots, n-1\}$. For each encoder, we measure the rate and distortion with \eqref{eq:rate-RCC}, and \eqref{eq:distortion-RCC}, respectively. As in the previous example, we employ  $\covko=\mat{I}_n$ as an anomaly and $P_{\rm det}$ to measure the performance of LD and NPD. 
Fig.~\ref{fig:RCS-white} illustrates the obtained trade-offs. The profiles confirm the trends observed with the optimal Gaussian additive encoder in Fig.~\ref{fig:RDD-white-PD} for which the cost of improving supervised and unsupervised detection performance is either a higher rate or a higher distortion. However, this compression scheme is suboptimal and the cost in rate necessary to achieve the same distortion-distinguishability performance is inevitably higher than the Gaussian-additive optimized with \eqref{eq:q}.

\subsection{JPEG Compression and detection}

\begin{figure}
\centering
  \includegraphics[scale=0.3]{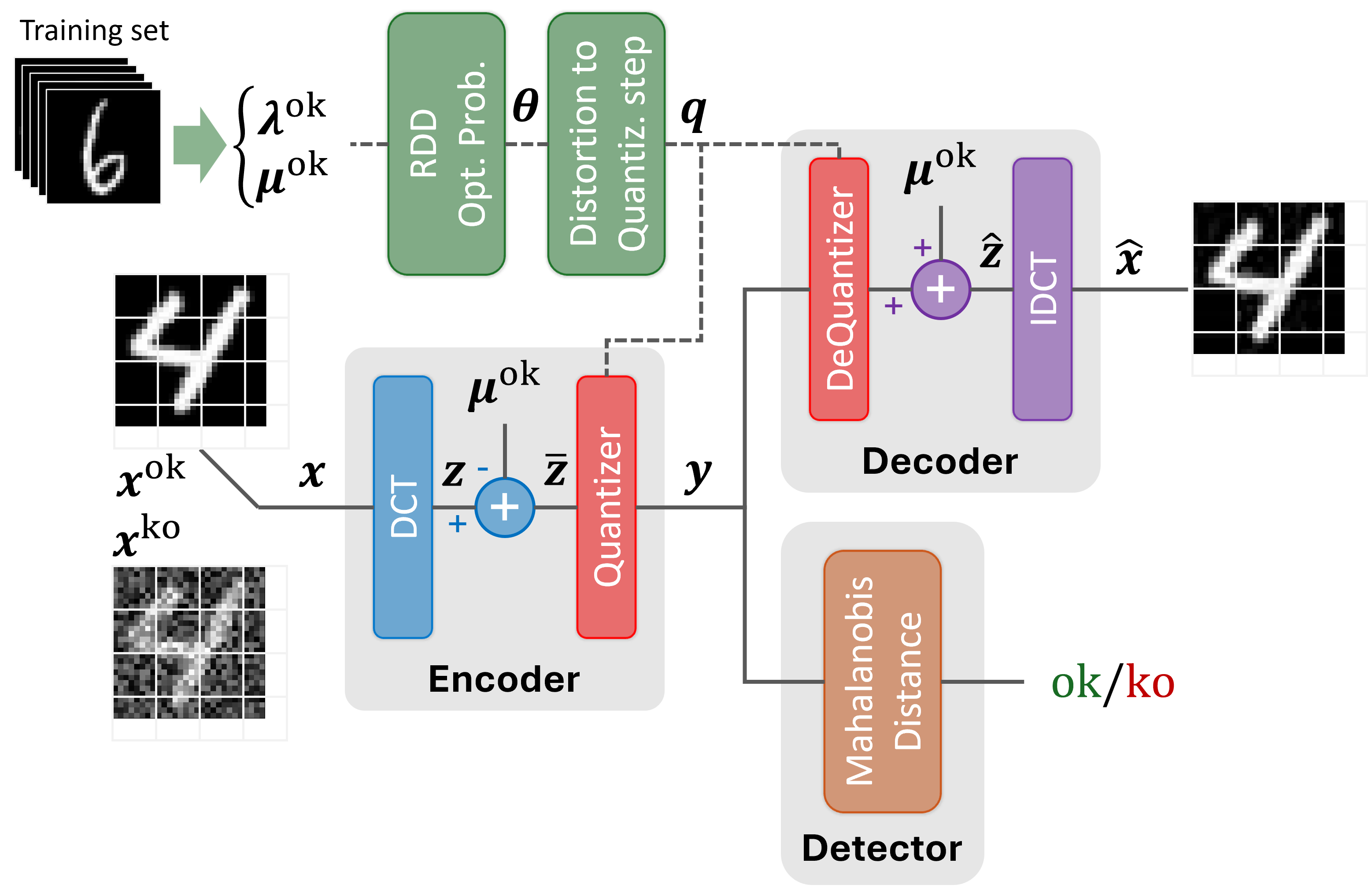}
\caption{JPEG block diagram.}
\label{fig:jpeg}
\end{figure}

While in the previous section, we relaxed the assumption on the encoder, here we also relax the one on signals. We apply the RDD framework to a standard compression algorithm to enhance the detection performance. We choose the JPEG \cite{Wallace_TCE1992} which is a standard for lossy compression of both color and grayscale images. For the sake of simplicity, we focus on the grayscale images of the MNIST dataset \cite{Deng_2012SPM}.

JPEG compression of grayscale images comprises a few stages. Images are first split into square blocks of $8\times 8$ pixels, and each block goes through Discrete Cosine Transformation (DCT) independently. The resulting DCT coefficients are quantized following a quantization table that depends on the desired quality of reconstruction. Finally, quantized coefficients of all blocks are encoded with an entropy encoder. Decoding is achieved through an entropy decoder that returns the quantized DCT coefficients to a dequantizer and a block performing the inverse DCT (IDCT) to obtain the blocks composing the reconstructed image.

Since the quantizer is the only stage responsible for the loss of information, it is also the one to work on to test the efficacy of the proposed framework. The idea consists in modifying the quantization table depending on the distortion and distinguishability constraints as in the
optimization problem in \eqref{eq:rate-distortion-distinguishability}. We then limit this example to the unsupervised scenario in which the compressor is unaware of the anomaly so that we only consider the first set of constraints \eqref{eq:rate-distortion-distinguishability}. 

We then assume that the DCT coefficients of each block are independent of each other and distributed as a Gaussian random variable with mean $\muok_j$ and variance $\lambdaok_j$ where $j=0,1,\dots,63$ is the index of the DCT coefficient. Let $\x$ be a block in a sample image of the training set and $\z={\rm DCT}(\x)$ the corresponding DCT coefficients. We estimate the mean and the variance of the $j$-th DCT coefficient as:
\begin{equation}
    \muok_j = \sum_{i=0}^{N-1} \sum_{l=0}^{B-1} z^{(i,l)}_j,
\;\;\;
    \lambdaok_j = \sum_{i=0}^{N-1} \sum_{l=0}^{B-1} \left[z^{(i,l)}_j - \muok_j\right]^2
\end{equation}
where $z^{(i,l)}_j$ is the $j$-th DCT coefficient of the $l$-th block in the $i$-th image, $N$ is the number of images in the training set, and $B$ is the number of blocks in an image.

By removing the mean $\muvok$ from $\z$, we obtain a zero-mean Gaussian signal $\zvb = \z - \muvok$ that allows us to turn the optimization problem \eqref{eq:rate-distortion-distinguishability} into \eqref{eq:q}-\eqref{eq:c2-a}. The solution consists of a set of parameters $\vec\xi = (\xi_0, \dots, \xi_{63})$ that minimizes the rate $\rho$ given a constrain in distortion $\delta$ and distinguishability $\omega$. From $\xi_j$, we obtain the average distortion $\theta_j = \lambdaok_j(1-\xi_j)$ that quantization induces to $\zb_j$ to minimize the rate and meet the distortion and distinguishability constraints. 

The mapping between the distortion and quantization steps is empirically estimated from the training set. This mapping allows us to map $\vec\theta=(\theta_0, \dots, \theta_{63})$ to a quantization table $\q$, where each $\theta_j$ corresponds to the quantization step $q_j$ to apply to $\zb_j$ and obtain its quantized version $y_j = \lfloor \zb_j/q_j \rceil$, where $\lfloor \cdot \rceil$ is the operator that rounds the argument to the nearest integer. Fig.~\ref{fig:jpeg} shows a block diagram of a JPEG encoder and decoder in which the quantization table $\q$ is obtained from the RDD optimization problem \eqref{eq:q}-\eqref{eq:c2-a}\footnote{Note that we have omitted the entropy encoder and decoder from the scheme, as entropy coding is not implemented due to its lossless nature.}. 

In this scheme, we measure distortion as the mean square error between $\xok$ and $\xhok$ and the rate as the Shannon entropy of $\yok$ (as in \cite{Agustsson_Neurips2017}):
\begin{equation}
\rho = I(\xok;\xhok) \simeq H(\yok) \simeq - n {\sum}_k\hat{p}_k \log{\hat{p}_k}
\label{eq:shannon-entropy}
\end{equation}
where $\hat{p}_k$ being the estimate of the probability for the entries of $\y$ to be equal to the $k$.

\captionsetup[subfigure]{width=0.3\columnwidth}
\begin{figure}
\centering
    \subfloat[\centering $\vec\lambda^\ok$]{
        \includegraphics{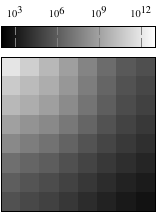}}
    \subfloat[\centering $\q$ with $\omega=0$]{
        \includegraphics{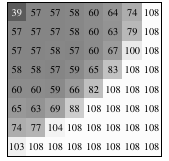}}
    \subfloat[\centering $\q$ with $\omega=10^3$]{
        \includegraphics{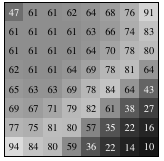}}
\caption{Variance of the DCT coefficients (a) and quantization tables as a result of two different values of the distinguishability constraint and same distortion $\delta=0.3$ (b) and (c). }
\label{fig:qtable}
\end{figure}

Considering the MNIST dataset with $N=60\,000$ training samples of $28\times 28$ images, each image is split into $B=16$ blocks. Fig.~\ref{fig:qtable}-(a) shows the variance of the DCT coefficients $\lambdavok$ estimated from the training set, while Figs.~\ref{fig:qtable}-(b) and (c) show two examples of the quantization table obtained with $\delta=0.3$ and $\omega=0$ (no constraint on distinguishability) and $\omega=10^3$, respectively. MNIST dataset, similarly to other types of images, manifests low-pass characteristics with an average high magnitude of the DCT coefficient related to low frequencies (upper left corner in Fig.~\ref{fig:qtable}-(a)) and a lower average magnitude of the high-pass coefficients (lower right corner). Following what was observed in Fig.~\ref{fig:RDD-profiles}, traditional rate-distortion applies the same distortion to all components, resulting in low frequencies that are relatively less distorted than high frequencies. We also observe this behavior in Fig.~\ref{fig:qtable}-(b) consistent with standard JPEG quantization tables \cite{Wallace_TCE1992}. In contrast and coherently with Fig.~\ref{fig:RDD-profiles}, Fig.~\ref{fig:qtable}-(c) shows that the distinguishability constraint leads to high frequencies being less distorted than medium frequencies, even though the latter is on average more informative. Both quantization tables in Fig.~\ref{fig:qtable}-(b) and Fig.~\ref{fig:qtable}-(c) adapt JPEG compression to the MNIST dataset. However, the latter should allow a detector to detect anomalies that affect the input given the same rate and distortion. 

For the assessment of the different tables, we consider an anomaly source obtained as a uniform noise mixed with the MNIST dataset and employ the Mahalanobis distance \cite{Maesschalck_20001CILS} as a detector. Since JPEG is a block-wise compression, the detector processes the DCT coefficients corresponding to blocks of compressed images and discriminates whether each block corresponds to a corrupted image. The Mahalanobis distance requires the mean and covariance of the DCT coefficients of a compressed block, which are estimated from the compressed images of the training set.

\begin{figure}
\centering
\includegraphics[]{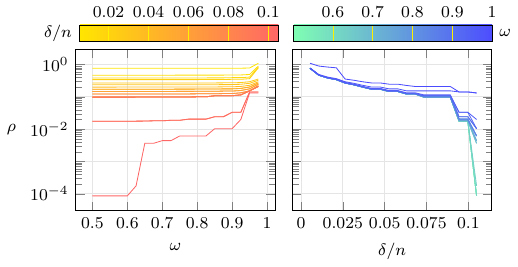}
\caption{Rate-distortion-distinguishability trade-off with JPEG adapted to MNIST dataset and uniform noise mixed with the signal as an anomaly.}
\label{fig:parto_jpeg}
\end{figure}

Fig.~\ref{fig:parto_jpeg} shows the Pareto curve for the JPEG compression and the Mahalanobis-based detector when uniform noise is mixed with the input image in which the rate is estimated as the Shannon entropy of the quantized DCT coefficients. The trends agree with those previously observed, in which improvement in detection or lower distortion comes at the cost of a higher rate. This result shows the relevance of the rate-distortion-distinguishability trade-off even in the case of images as signal and JPEG as a compressor.

\subsection{End-to-end image compression and detection}

\begin{figure}
\centering
\includegraphics[scale=0.3]{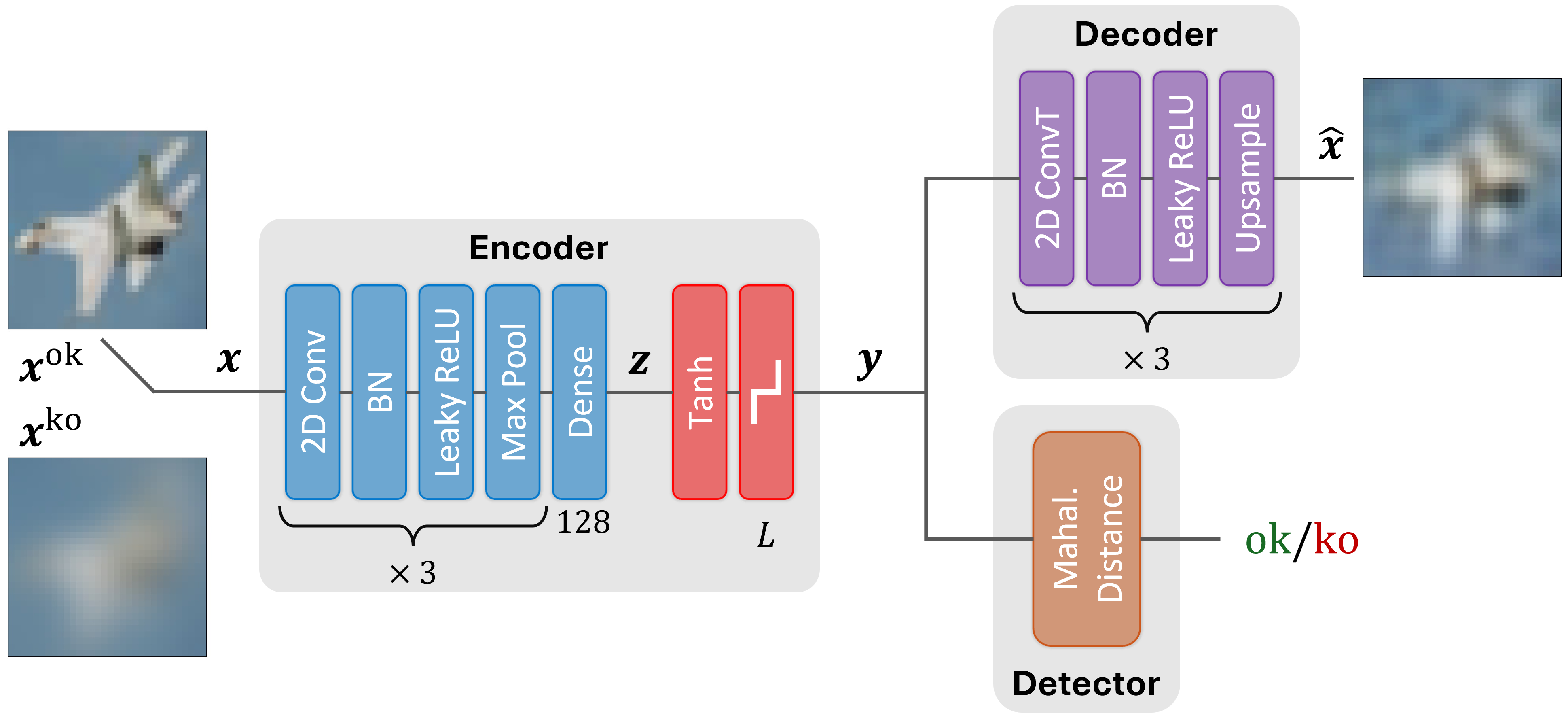}
\caption{System for Neural Network-based CIFAR-10 image compression and detection.}
\label{fig:neural_compressor}
\end{figure}

Compression mechanisms based on AEs have recently gained popularity for the possibility of end-to-end optimization of the encoder and decoder. On this line, we explore the simultaneous training of an encoder, a decoder, and a detector. 

An AE alone performs dimensionality reduction. Hence, for proper compression, we need to add a quantization layer in the bottleneck that quantizes the latent vector and limits the rate. We follow the quantization procedure proposed in \cite{Agustsson_Neurips2017} that has already been adopted in other AE-based compression mechanisms \cite{Mentzer_CVF2018, Blau_ICML2019}. Quantization consists of squeezing the values of the latent representation in the range $[-1, 1]$ with a $\tanh$ and performing a uniform quantization of each component with $L$ levels.

In this example, we focus on the CIFAR-10 dataset \cite{Krizhevsky_2009} and use the AE architecture proposed in \cite{Ruff_PMLR2018} to which we add the quantization layer. Specifically, the encoder reduces the $32\times 32 \times 3$ input image ($n=3072$) to an $m=128$ dimensional vector $\z$ and then apply quantization so that the compressed signal $\y$ is computed as
\begin{equation}
\label{eq:quatization_forward}
y_j = {\rm arg} \min_k \left| \tanh \left( z_j \right) - l_k\right|
\end{equation}
where $l_k=-1+2k/(L-1)$ is the $k$-th quantization level.

Since the expression in \eqref{eq:quatization_forward} is not differentiable, during the backward pass it is approximated with \cite{Mentzer_CVF2018}
\begin{equation}
\label{eq:quatization_back}
    y_j = \sum_{k=0}^{L-1}\frac{e^{-\left|\tanh \left( z_j \right) - l_k\right|}}{\sum_{t=0}^{L-1} e^{-\left|\tanh \left( z_t \right) - l_t\right|}} l_k.
\end{equation}

The resulting vector $\y$ is fed to the decoder, which outputs the reconstructed input $\hat{\x}$, and the detector, which computes the anomaly score as the Mahalanobis distance of $\y$.  The detection capabilities of the detector are promoted by a regularization term acting on $\| \z \|^2$. This approach is inherited from the Deep Support Vector Data Description \cite{Ruff_PMLR2018} and Shrink AE \cite{Cao_2019TC} anomaly detectors in which the encoder performs dimensionality reduction optimized for detection by minimizing the volume of the latent representation of the normal signal $\zok$. The complete system is depicted in Fig.~\ref{fig:neural_compressor} and it is trained with the following loss function:
\begin{align}
\label{eq:loss}
\mathcal{L}_{\beta}(\xok) &= {\rm MSE}(\xok, \xhok) + \frac{\beta}{m} \| \zok \|^2
\end{align}
where $\beta$ is the regularization weight. With such a loss function, the encoder can be optimized for both reconstruction (by minimizing {\rm MSE}) and anomaly detection (by minimizing the regularization term).

While distortion is measured as the mean squared error between $\xok$ and $\xhok$, the rate is approximated with the Shannon entropy of $\yok$ as in \eqref{eq:shannon-entropy}
where the dimension of $\yok$ is now $m$ and $\hat{p}_k$ is the estimate of the probability for the entries of $\yok$ to be equal to the $k$-th quantization level $l_k$.

As anomalies, we use CIFAR-10 corruptions defined in \cite{Hendrycks_2019ICLR} and employed for anomaly detection performance benchmark in \cite{Ruff_2021ProcIEEE}. In particular, we select the Gaussian blur as the target anomaly. 

We train\footnote{Each training is performed using the Adam optimizer with the weight of $l_2$ weight regularization set to $10^{-6}$, batch size of $200$, and an initial learning rate of $10^{-4}$ that is lowered whenever the loss reaches a plateau with patience of $20$ epochs.} several AEs characterized by a different combination of $L \in \{2, 2^2, \dots, 2^9\}$, which acts on the rate, and $\beta \in \{0, 10^{-6}, 10^{-5} \dots, 10^3\}$, which administers the trade-off between distortion and detection.
The resulting three-fold trade-off is illustrated in Fig.~\ref{fig:AE-trade-off}. Although no assumption is made on the signals and the compressor, the distinguishability dependency on rate and distortion, confirms the theoretical trends.

\begin{figure}
\centering
  \includegraphics[]{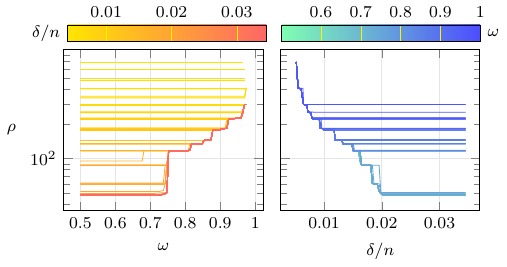} \label{fig:AE-unsupervised}
\caption{Rate-distortion-distinguishabilty trade-off for Neural Network-based system working with CIFAR-10 images in the Gaussian blur anomaly-agnostic case.}
\label{fig:AE-trade-off}
\end{figure}

\section{Conclusion}
\label{sec:conclusion}

In applications where anomalies have to be detected from compressed signals, a trade-off between rate, distortion, and distinguishability naturally emerges.
We addressed such trade-off with a framework that extends the classical rate-distortion theory by adding a distinguishability constraint.

Under the assumption of Gaussian signals and considering a Gaussian-additive encoder, we derived the optimization problem that allows us to observe and discuss the Pareto surface of the three involved quantities. Previous works have shown that optimizing solely for distortion given a rate budget negatively affects the distinguishability between normal and anomalous compressed signals. In this work, we have shown that to preserve distinguishability and improve detection performance, one must increase either the rate or the distortion. For instance, given a distortion constraint, a compressor preserves distinguishability by allocating some rate to signal components that provide the least information for reconstruction but are relevant for detection.

We assess how the proposed framework generalizes to different scenarios with three examples that involve: a compression scheme based on linear dimensionality reduction followed by quantization, a variation of JPEG compressor to include distinguishability, and a compression mechanism based on an autencoder optimized for both reconstruction and detection. All these three examples confirmed the trends observed in the theoretical setting.

\bibliographystyle{IEEEtran}
\bibliography{reference}

\end{document}